\documentclass[sigconf, 10pt]{acmart}

\usepackage{balance}
\usepackage{amsmath}
\usepackage{algorithm}
\usepackage{algorithmicx}
\usepackage[noend]{algpseudocode}
\usepackage{color}

\settopmatter{printacmref=false} %comment to print acm references
\setcopyright{none} %comment to print copyright info
\renewcommand\footnotetextcopyrightpermission[1]{} % removes footnote with conference information in first column
\pagestyle{plain} % removes running headers

\makeatletter
\def\BState{\State\hskip-\ALG@thistlm}
\makeatother

\newtheorem{theorem}{Theorem}[section]

\newtheorem{lemma}[theorem]{Lemma}

%\vldbTitle{Private Synthetic Datasets Over Weakly Constrained Domains}
%\vldbAuthors{Luke Rodriguez, Bill Howe, and Julia Stoyanovich}
%\vldbDOI{https://doi.org/TBD}

%\acmDOI{##.###/###_#}
%\acmISBN{###-####-##-###/##/##}
%\acmConference[WOODSTOCK'97]{ACM Woodstock conference}{July 1997}{El Paso, Texas USA}
%\acmYear{####}
%\copyrightyear{####}
%\acmPrice{##.##}
%\acmSubmissionID{###-###-##}

\begin{document}

\title{Privacy-Preserving Synthetic Datasets \\ Over Weakly Constrained Domains}

%\numberofauthors{3}
%\author{
 % \alignauthor{Luke Rodriguez\\
 %   \affaddr{University of Washington}\\
 %   \email{rodriglr@uw.edu}}
 % \alignauthor{Bill Howe\\
 %   \affaddr{University of Washington}\\
 %   \email{billhowe@uw.edu}}
 % \alignauthor{Julia Stoyanovich\\
%    \affaddr{Drexel University}\\
%    \email{stoyanovich@drexel.edu}}
%}
%\maketitle

\author{Luke Rodriguez}
\affiliation{
	\institution{University of Washington}
    \city{Seattle}
    \state{WA}
}
\email{rodriglr@uw.edu}

\author{Bill Howe}
\affiliation{
	\institution{University of Washington}
    \city{Seattle}
    \state{WA}
}
\email{billhowe@uw.edu}

% \author{Julia Stoyanovich}
% \affiliation{
% 	\institution{Drexel University}
% 	\city{Philadelphia}
%     \state{PA}
% }
% \email{stoyanovich@drexel.edu}

\begin{abstract}

Techniques to deliver privacy-preserving synthetic datasets take a sensitive dataset as input and produce a similar dataset as output while maintaining differential privacy.  These approaches have the potential to improve data sharing and reuse, but they must be accessible to non-experts and tolerant of realistic data.  Existing approaches make an implicit assumption that the active domain of the dataset is similar to the global domain, potentially violating differential privacy.

In this paper, we present an algorithm for generating differentially private synthetic data over the large, weakly constrained domains we find in realistic open data situations.  Our algorithm models the unrepresented domain analytically as a probability distribution to adjust the output and compute noise, avoiding the need to compute  the full domain explicitly.  We formulate the tradeoff between privacy and utility in terms of a ``tolerance for randomness'' parameter that does not require users to inspect the data to set.  Finally, we show that the algorithm produces sensible results on real datasets.

\end{abstract}

% \begin{CCSXML}
% <ccs2012>
% <concept>
% <concept_id>10002978.10003029.10011703</concept_id>
% <concept_desc>Security and privacy~Usability in security and privacy</concept_desc>
% <concept_significance>500</concept_significance>
% </concept>
% <concept>
% <concept_id>10002978.10003029.10011150</concept_id>
% <concept_desc>Security and privacy~Privacy protections</concept_desc>
% <concept_significance>300</concept_significance>
% </concept>
% </ccs2012>
% \end{CCSXML}

% \ccsdesc[500]{Security and privacy~Usability in security and privacy}
% \ccsdesc[300]{Security and privacy~Privacy protections}

% \keywords{SOME KEYWORDS}

\maketitle

\section{Introduction}
We consider differential privacy mechanisms to produce privacy-preserving synthetic datasets from real datasets without making any assumptions about the domain from which the original dataset is drawn. Our motivation is open data sharing situations, where the data publisher is typically not the data owner and therefore cannot make assertions about how the data was collected.

Consider, for example, a survey that prompts users to enter their \texttt{gender} in a free-text field. After collecting a sufficient number of responses, the data collector passes the to the data publisher for privatization and release, who notices that every response is either \texttt{male} or \texttt{female}. She models the distribution appropriately using one of several existing methods \cite{hay2010boosting,xiao2011differential,xiao2012dpcube}, and creates synthetic data with \texttt{gender} values of \texttt{male} and \texttt{female} with zero probability of any other response.  But a single respondent entering \texttt{genderqueer} as their response violates the differential privacy.  
%Even though a human may take responsiblity for asserti asserted that the domain is
% Maybe something about shifting responsibility to the human.  A human that follows a mechanistic procedure still violates privacy, even though it is not the "fault" of the software.  So we want to design practical systems that discourage risky decisions.

One solution is to ask the data provider to constrain the domain, but this approach changes the nature of the survey.  Another solution is to refuse to release data in this setting, citing the inapplicability of differential privacy.  But this failure seems unsatisfying, as it would typically be acceptable to just ignore the rare values and still release something useful.  In this paper, we consider the problem of deciding on an appropriate threshold for rare values, which turns out to be a subtle issue.

%This problem is not unique to categorical variables. Consider a database of political donations with schema (\texttt{donor}, \texttt{amount}, \texttt{campaign}). These donations can be as little as one cent, but there is no maximum theoretical amount. If the active domain of the dataset contained only donations below $\$1,000$, the synthetic data generated using any proposed methods for numeric histogram publication \cite{acs2012differentially,rastogi2010differentially,xu2013differentially} would also have $\$1,000$ as an upper limit. However, if a new version of the dataset included a contribution of $\$100,000$ from a wealthy donor, the bounds would change, exposing the fact that the wealthy donor was included in the dataset and violating her privacy.

The approach we take involves sampling from differentially private histograms; a survey by Meng et al summarizes the many solutions to this problem \cite{meng2017different}. However, when producing synthetic data, the histogram cannot be modeled as a parallel set of differentially private count queries, one for each bin.  Any possible count query over the whole domain could potentially be asked by a user, so we essentially need an extremely large number of bins. We show that any mechanism must include bins from outside the active domain with non-zero probability to preserve differential privacy (Theorem \ref{noactive}).

Our solution is to allow a (potentially large) domain for the attribute being modeled, and replace low-count bins with ``random'' bins with some probability.  We explore the relationship between the domain size and this probability of including unrealistically ``random'' values and domain size to empower the publisher to tailor their mechanism \emph{without inspecting the data}.  
%The publisher can set conservative (loose) bounds for the specified domain when the true domain is uncertain, or set tighter bounds when more information is available.  When the specified domain is too conservative, the synthetic dataset may include unrealistically extreme values, and utility will suffer.  But in all cases, differential privacy is preserved --- unlike existing approaches with utility results that rely on tight domain assumptions.  

We do not disallow the data publisher from making a conservative guess about the true domain under this approach.  A poor guess will only impact utility, not privacy. For example, a data publisher who sees an attribute labeled gender may simply assert that the domain only includes \texttt{male} and \texttt{female}.  Privacy is preserved, and utility may be acceptable.  However, this assertion would exclude any other values, significantly impacting utility. On the other hand, the data publisher may decide that any English word might be valid. This decision would allow for synthetic data outside of the male/female binary, but the size of the domain means that there is more likely to be considerable noise negatively impacting utility.  The only thing we want to discourage is mechanizing your ``guess'' as just equivalent to the active domain.

In the remainder of this short paper, we prove a simple theorem that motivates our approach, then formulate an algorithm that is a) efficient for large domains and b) allows control over the probability of including values from outside the active domain.  We then show that this algorithm produces sensible results on real data with no data-dependent input from the user.

\section{Bad Bins Theorem}
\label{sec:theorem}

We show the impossibility of producing DP synthetic data that includes only values from the active domain.  We begin with a Lemma:

\begin{lemma}
\label{zeroprob}
Given a differentially private mechanism 
$q$ with respect to a set of database instances $\mathcal{D}$ of schema $R(\mathcal{A})$, if $P[q(D_i) = r] = 0$ for some database instance $D_i$ and a result $r$, then $P[q(D) = r] = 0$ for all $D \in \mathcal{D}$.
\end{lemma}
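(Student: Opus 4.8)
The plan is to exploit the multiplicative form of the differential privacy guarantee, which forces the support of $q$ to be identical across all database instances. Recall that $q$ being $\epsilon$-differentially private means that for every pair of \emph{neighboring} instances $D, D'$ (those differing in a single record) and every result $r$, we have $P[q(D) = r] \le e^{\epsilon}\, P[q(D') = r]$. The central observation is that $e^{\epsilon}$ is a finite constant, so whenever the right-hand side vanishes the left-hand side must vanish as well. This is what turns a local privacy bound into a global statement about the support.

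First I would establish the claim for neighbors. Suppose $P[q(D_i) = r] = 0$ and let $D_j$ be any neighbor of $D_i$. Applying the inequality in the direction $D_j, D_i$ gives $P[q(D_j) = r] \le e^{\epsilon}\, P[q(D_i) = r] = e^{\epsilon} \cdot 0 = 0$, and since probabilities are nonnegative we conclude $P[q(D_j) = r] = 0$. Thus a zero-probability result at one instance is inherited by every immediate neighbor, and the set of instances on which $r$ has probability zero is closed under the neighbor relation.

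To reach an arbitrary $D \in \mathcal{D}$, I would argue that the neighbor relation connects the whole space: any instance can be transformed into any other by a finite sequence of single-record edits---for example, deleting the records of $D_i$ one at a time and then inserting the records of $D$ one at a time---so $D_i$ and $D$ are joined by a path $D_i = E_0, E_1, \dots, E_k = D$ in which consecutive instances are neighbors. Propagating the neighbor claim along this path inductively yields $P[q(D) = r] = 0$. The main obstacle is precisely this connectivity step: it relies on the database space being closed under single-record modifications and on every intermediate instance lying in $\mathcal{D}$. I would therefore either invoke the standard unbounded-DP neighbor model, under which this closure holds automatically, or state the assumption on $\mathcal{D}$ explicitly so that the chain of neighbors never leaves the domain of the mechanism.
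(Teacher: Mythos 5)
Your proof is correct and follows essentially the same approach as the paper's: connect $D_i$ to an arbitrary instance by a path of neighboring databases and propagate the zero probability along it using the multiplicative DP bound and nonnegativity. Your version is slightly tidier (induction on the neighbor case rather than an accumulated $e^{(n-1)\epsilon}$ factor) and, usefully, makes explicit the connectivity assumption on $\mathcal{D}$ that the paper leaves implicit.
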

\begin{proof}
Let $P[q(D_i) = r] = 0$ for some database instance $D_i$ and result $r$. For any database instance $D_j \in \mathcal{D}$, define a sequence of database instances $S(D_j, D_i) = \{D_1, D_2, \ldots , D_n\}$ such that $D_1 = D_i$, $D_n = D_j$, and $D_k$ and $D_{k+1}$ are neighbors for all $0 \leq k < n$. \footnote{Two datasets are neighbors if they differ in the presence or absence of a single record, following the differential privacy definition.} It follows that
\begin{align*}
	P[q(D_j) = r] &\leq e^{\epsilon} \times P[q(D_2) = r] \\
    &\leq e^{2\epsilon} \times P[q(D_3) = r] \\
    &\vdots \\
    &\leq e^{(n-1)\epsilon} \times P[q(D_i) = r] \\
    &\leq 0
\end{align*}
Since the probability cannot be less than $0$, we find that $P[q(D) = r] = 0$ for all $D \in \mathcal{D}$
\end{proof}

\begin{theorem}
\label{noactive}
There does not exist a differentially private mechanism that \emph{only} returns elements from the active domain.
\end{theorem}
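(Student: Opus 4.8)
The plan is to argue by contradiction and let the zero-probability propagation of Lemma \ref{zeroprob} do all the work. Suppose, toward a contradiction, that there is a differentially private mechanism $q$ whose every output on an input database $D$ uses only values drawn from the active domain $\mathrm{adom}(D)$ of $D$. The key observation is that the active domain is itself a property that changes between neighboring databases, so a result that is ``legal'' for one input can be ``illegal'' (probability zero) for another; Lemma \ref{zeroprob} forbids exactly this kind of local disagreement, and that tension is what I would exploit.

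Concretely, I would first fix a value $v$ from the global domain and a specific synthetic-data result $r$ that contains at least one record taking value $v$. I would then exhibit a \emph{witness} database $D^-$ whose active domain does not contain $v$ (for instance, a database all of whose records avoid $v$, which exists because the global domain has more than one value). Because $q$ only emits active-domain values, $r$ cannot be produced from $D^-$, so $P[q(D^-) = r] = 0$. Now I invoke Lemma \ref{zeroprob}: since $r$ has probability zero on the single instance $D^-$, it must have probability zero on \emph{every} database instance, in particular on any $D^+$ whose active domain \emph{does} contain $v$.

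Next I would turn this into the stated impossibility. The conclusion $P[q(D^+) = r] = 0$ says that even on an input where $v$ is present --- indeed even on an input whose active domain is exactly $\{v\}$ --- the mechanism never outputs the result $r$ built from $v$. Since $v$ and $r$ were arbitrary, no result mentioning any particular value can be produced with positive probability, leaving only degenerate outputs that carry none of the input's values. A mechanism that cannot place positive probability on any value appearing in its input is not a synthetic-data generator in any useful sense, which is the contradiction I am after.

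The step I expect to be the main obstacle --- and the one I would state carefully rather than gloss --- is pinning down the non-triviality condition so the contradiction is airtight. The bare phrase ``only returns elements from the active domain'' is satisfied \emph{vacuously} by a mechanism that always outputs, say, the empty dataset, and such a mechanism is trivially differentially private. So I would make explicit that $q$ is required to reproduce its input's values, i.e.\ that for some value $v$ there is an input on which a result $r$ containing $v$ has positive probability; Lemma \ref{zeroprob} then converts that one positive instance into a direct contradiction with the forced zero at the witness $D^-$. A secondary point I would verify is that the Lemma's neighbor-chain hypothesis holds here: $D^-$ and $D^+$ must be connected by single-record insertions and deletions, which is always possible for two instances over the same schema $R(\mathcal{A})$, so the chain $S(D^-, D^+)$ invoked by the Lemma indeed exists.
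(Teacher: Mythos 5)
Your proof is correct and takes essentially the same route as the paper: both arguments use Lemma \ref{zeroprob} to propagate a forced zero probability (from a witness database whose active domain omits a given value) to every database instance, collapsing the mechanism's possible outputs to degenerate ones --- your per-value witness $D^-$ plays exactly the role of the paper's intersection $\cap_{D \in \mathcal{D}} c(D) = \emptyset$. If anything, you are more careful than the paper on one point: the paper concludes only that the mechanism ``must be the trivial one that returns the empty set,'' leaving the exclusion of that vacuous mechanism implicit, whereas you explicitly state the non-triviality condition needed to turn this into a genuine contradiction with the theorem as worded.
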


Intuition: There must be a non-zero chance of returning every element in the global domain, or else Lemma \ref{zeroprob} is violated.

\begin{proof}
Let $\mathcal{D}$ be the set of possible database instances. Let $c$ be a function such that $c(D)$ returns the active domain of $D$. Let $c^*$ be some differentially private mechanism that returns a value set $c^*(D) \subseteq c(D)$.  This means that $c^*$ can be probabilistic, but only the value sets in the powerset of $c(D)$ have non-zero probabilities. Next let us pick some $D_i \in \mathcal{D}$. Then $P[q(D_i) = r] = 0$ for all $r \not\subseteq c(D_i)$, and additionally by Lemma \ref{zeroprob} that $P[q(D) = r] = 0$ for all $D \in \mathcal{D}$. This holds true for all $c(D)$, and thus the possible range $R$ of $c^*(D)$ is constrained by $R = \cap_{D \in \mathcal{D}} c(D) = \emptyset$. Therefore, our mechanism $c^*$ must be the trivial one that returns the empty set.
\end{proof}

Examining the proof of Theorem \ref{noactive}, we note that it depends crucially on the choice of function $c$. If we allow this function to be defined on a pre-defined domain outside of the active domain found in the database instance $D$, this range constraint no longer holds and the mechanism can in fact be differentially private. This observation is the motivation for the categorical method outlined in the next section. 

\section{Algorithm}
\label{sec:algorithm}

In this section we establish a differentially private algorithm for categorical (discrete) data given a constrained domain. The algorithm allows the data publisher to act unilaterally to produce a synthetic dataset with strong guarantees about privacy even when only very weak assumptions can be made about the domain, and even when these assumptions may be wrong. 

To illustrate the ideas, we will refer to the case of a single string attribute $A$ with a small active domain $adom(A)$ but with a global domain $dom(A)$ of all possible alphanumeric strings.  Our task is to select the bins we will include in the histogram.  Once we have the bins, we can add Laplace noise to their counts or use any other appropriate mechanism; we are not concerned with that step.

% \begin{figure}
% 	\centering
%     \includegraphics[width=\columnwidth]{figs/categorical.png}
% 	\caption{With a large domain and a naive solution, almost half of the global domain could be included in a noisy histogram.}
% 	\label{fig:cat_scenario}
% \end{figure}

\begin{figure}
	\centering
    \includegraphics[width=\columnwidth]{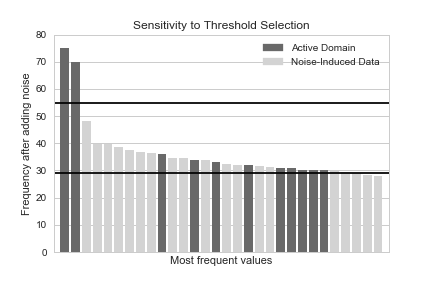}
	\caption{Illustration of how the true values of a long-tailed distribution (dark bars) can be obscured or excluded by the noise-induced bins of Algorithm \ref{cat_hist} (light bars).}
	\label{fig:cat_thresholds}
\end{figure}

One way to select the bins for the histogram is to use the exponential mechanism: sort the bins by their counts, and select $k$ bins with exponentially decaying probability proportional to their count.  However, repeated execution of this mechanism requires repeatedly splitting the privacy budget, and would require specifying $k$.  But setting $k$ appropriately requires inspecting the active domain.

Another approach is to use the Laplace mechanism by adding noise to every bin in the global domain.  Besides being computationally infeasible, this approach would cause utility to vanish. About half of the categories outside of the active domain would receive a positive weight due to the symmetry of Laplace noise, producing an enormous number of seemingly random bins. For a sufficiently large domain, the probability density of these random bins would outweigh the probability density of the categories found in the active domain. 

We could address this issue by only including those categories with noisy counts that exceed a threshold $\tau$. The larger this threshold is, the less likely we are to include any given category from outside the active domain.  But setting $\tau$ appropriately  requires inspecting the active domain: If it is set too high, rare values will not appear in the synthetic dataset.  If it is set too low, there will be a high number of bins with random labels. 

Instead, we express $\tau$ in terms of a tolerance for random values, a parameter which could potentially be set as a global policy across all datasets.  

%We know that the probability of not including any categories from outside the active domain can be viewed as a trial for each of the $n$ categories in the global domain not in the active domain, where each has probability $p$ of success, or equivalently that the noisy count of the category exceeds the threshold $\tau$.  

%\textbf{Controlling for random values:} We cannot select $\tau$ based on any inspection of the support of the active domain without incurring an additional cost to the privacy budget. Since the impact of $\tau$ on the user experience manifests in terms of increased likelihood of unexpected values in the histogram, we reformulate $\tau$ in terms of a tolerance for including values outside the active domain. This provides a parameter that can be applied across multiple datasets.
%To set this parameter, the publisher is permitted to iteratively generate synthetic datasets, inspect the results, and adjust the tolerance, as long as they only release the final dataset.

To express this tolerance, we model each bin outside of the active domain as a binary trial, where each bin is included with probability $p$ equal to the probability that the Laplace noise exceeds the threshold $\tau$. Assuming that $\tau \geq 0$, this constraint means that $p = \frac{1}{2}e^{-\epsilon\tau}$. Therefore we can sample from this binomial distribution over $n$ trials to find the number of categories to include from outside the active domain, where $n$ is the size of the global domain. Let $B(k,n,p)$ represent the Binomial probability of $k$ successes on $n$ trials with probability $p$. This expression overestimates the probability slightly, as the categories in the active domain are already accounted for. Using this relationship we can define a probability $\rho$ of including zero categories from outside the active domain: $\rho = B(0,n,p) = (1 - \frac{1}{2}e^{-\epsilon\tau})^n$.

We can set $\rho$ to any appropriate value and solve for the noisy threshold $\tau$ in terms of the other parameters, resulting in the relation $\tau = -\frac{1}{\epsilon}ln[2(1-\rho^{\frac{1}{n}})]$. Because this expression is derived from the assumption that $\tau \geq 0$, the relationship only holds when $\rho^{\frac{1}{n}} \geq \frac{1}{2}$. However, for any sizable value of $n$ we note that $\rho^{\frac{1}{n}} \gg \frac{1}{2}$ so this relationship holds in general.

By viewing this process as binomial selection, we also make the problem computationally feasible.  We can analytically determine the number of bins $k$ from outside the active domain that will exceed the threshold $\tau$.  Let $k = B(n,p)$ be a random variable chosen from the binomial distribution with $n$ and $p$ defined as before.  Then we randomly sample $k$ categories from the global domain and give each a weight sampled from the following probability distribution derived from Bayes' Theorem: $P(x | x \geq \tau) = \frac{P(x)}{P(x \geq \tau} = \epsilon e^{-\epsilon(x-\tau)}$, which is simply an exponential distribution with $\lambda = \epsilon$ right-shifted to $\tau$. Modeling the inactive domain with a probability distribution significantly reduces the computation necessary.

This procedure has the effect of replacing the original distribution present in the data with a version of an exponential distribution as a result of the nature of the Laplace noise being added to each of the values from outside of the active domain. This exponential distribution depends only on the size of the domain and not the data in the database instance $D$.  For high values of $\tau$, values in the tail of the distribution may be excluded.  But even for low values of $\tau$, it becomes likely that random bins will have higher counts than true bins, obscuring the true values.  Figure \ref{fig:cat_thresholds} illustrates the situation. %In general, the more you know about the population distribution of the data, the better you can do, but this mechanism applied to bin selection naturally obscures rare values and always achieves differential privacy.
\vspace{-1em}
\begin{algorithm}
  \caption{Categorical Histogram Method}\label{cat_hist}
  \begin{algorithmic}[1]
    \Require{Size of the domain $n$, tolerance for values outside the active domain $\rho$, privacy budget $\epsilon$, and the true histogram ($C$, $S$) where $C$ is a vector of categories in the active domain and $S$ is a vector of true frequencies for each corresponding category in $C$.}
    \Ensure{Differentially private histogram}
    \Procedure{CatHist}{$n, \rho, \epsilon, C, S$}
      \State $\tau \gets -\large\frac{\ln(2(1-\rho^{\frac{1}{n}}))}{\epsilon}$
      \State $i \gets 0$
      \While{$i < |C|$}
        \State $s_i \gets Lap(s_i, \frac{1}{\epsilon})$
      	\If{$s_i < \tau$} Remove($c_i, s_i$) \EndIf
        \State $i \gets i+1$
      \EndWhile
      \State $k \gets \frac{1}{2}e^{-\epsilon \tau}$ \Comment{Noisily select number of bins}
      \State $j \gets 0$
      \While{$j < k$}
      	\State Append($C$, GetCategoryFromDomain())
        \State Append($S$, $\tau + ExpDist(\epsilon)$)
      	\State $j \gets j+1$
      \EndWhile
      \Return{$C,S$}
      \EndProcedure
  \end{algorithmic}
\end{algorithm}

Algorithm \ref{cat_hist} shows the full procedure for categorical histograms. The algorithm accepts the size of the domain $n$, a tolerance for values outside the active domain $\rho$, a privacy budget $\epsilon$ and the true histogram $(C, S)$ where $C$ is a vector of categories in the active domain and $S$ is a vector of true frequencies for $C$. The proof that Algorithm \ref{cat_hist} is differentially private follows from the definition of the Laplace mechanism, and is omitted for space.

This algorithm has two parameters that can be varied to control the tradeoff of privacy and pollution from the large domain: $\rho$ and $\epsilon$. The $\rho$ parameter bounds the probability of including random values, and a low value increases the utility of the dataset for collaborators. However, a high $\rho$ will increase $\tau$, and therefore require a dataset whose categories have high support to produce useful results.

\section{Experiments}
\label{sec:experiments}

\begin{figure*}
	\centering
    \includegraphics[width=2.3in]{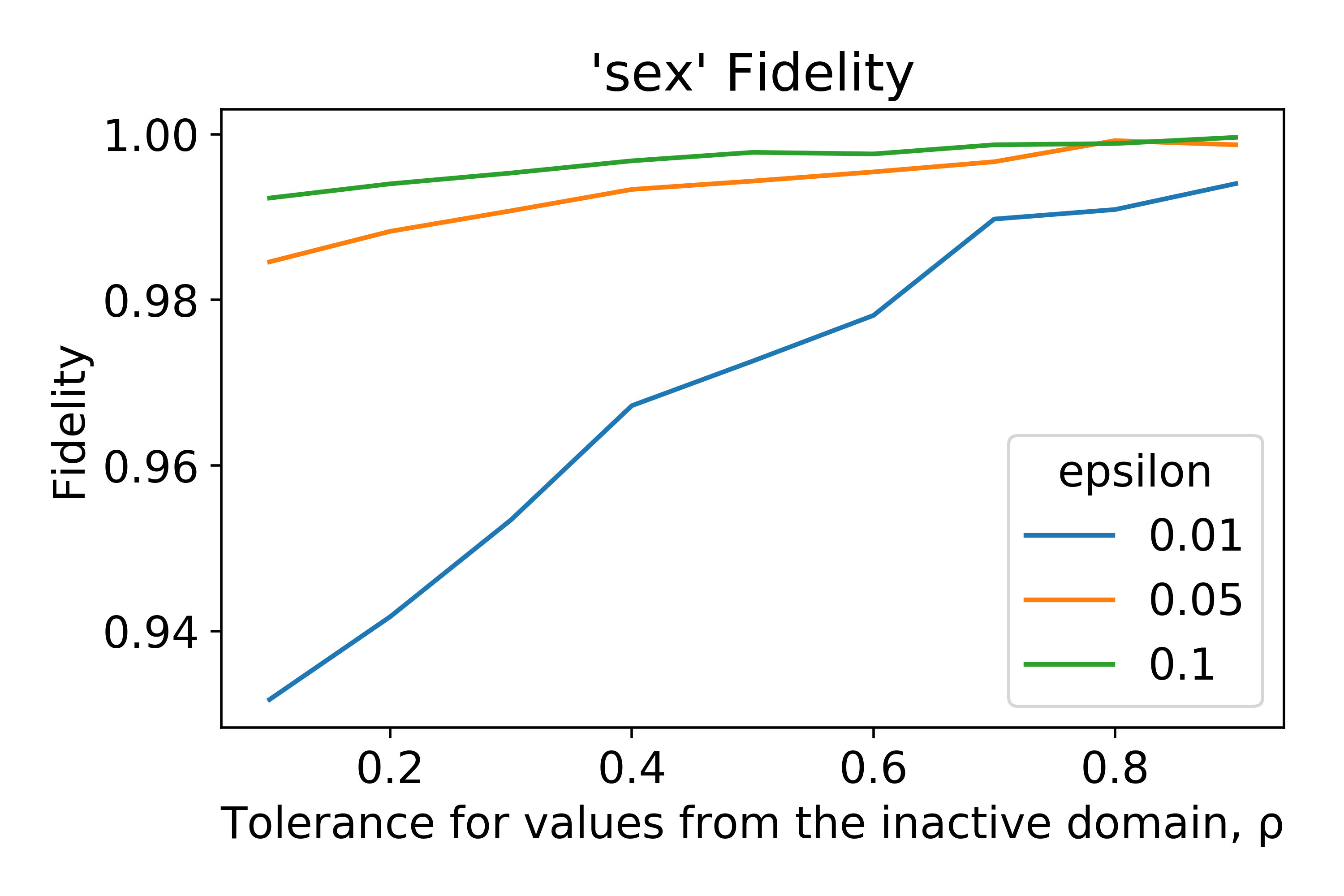}
    \includegraphics[width=2.3in]{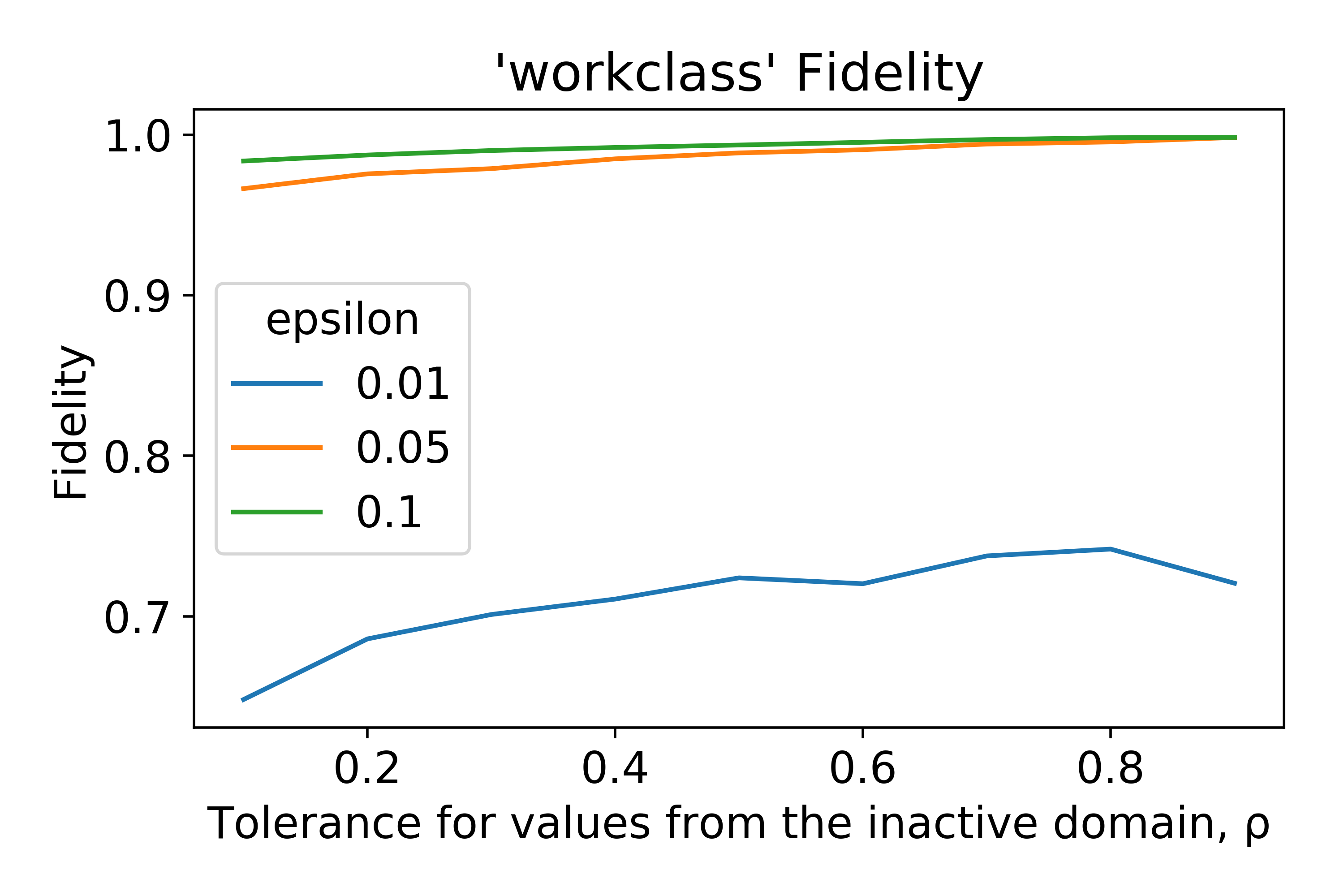}
    \includegraphics[width=2.3in]{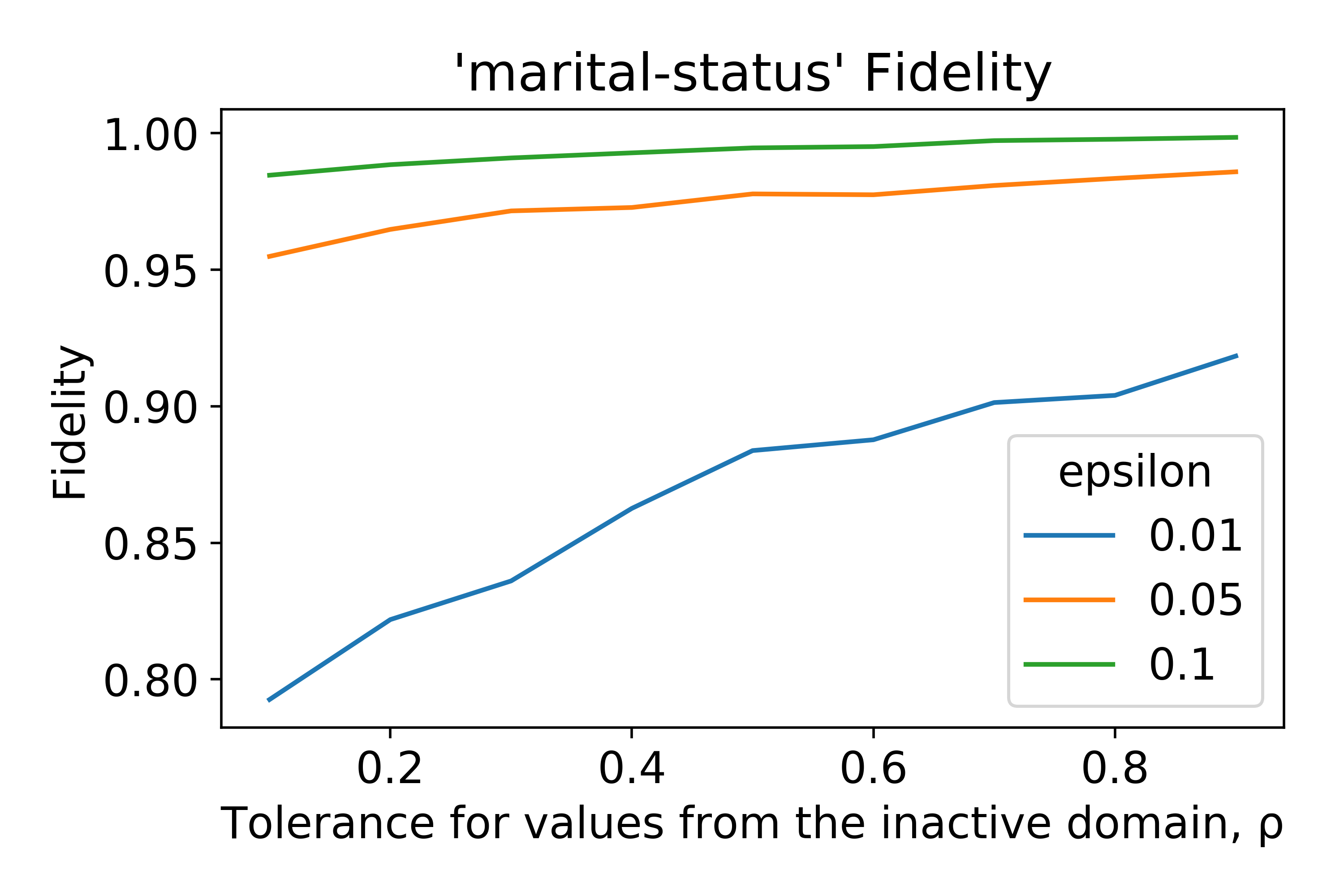}
	\caption{Fidelity results for categorical data over varying $\rho$ and $\epsilon$ values}
	\label{fig:cat_utility}
\end{figure*}
In order to evaluate the behavior of Algorithm \ref{cat_hist}, we run it on attributes from the UCI Adult dataset \cite{Lichman:2013}. We specifically investigate whether or not CatHist generates sensible results.

% \begin{figure*}
% 	\centering
%     \includegraphics[width=2in]{figs/sex.png}
%     \includegraphics[width=2in]{figs/workclass.png}
%     \includegraphics[width=2in]{figs/marital-status.png}
% 	\caption{True distributions for \texttt{sex}, \texttt{workclass}, and \texttt{marital-status}}
% 	\label{fig:cat_distributions}
% \end{figure*}

\textbf{Evaluation Metric:} A natural choice of metric is to evaluate the accuracy of some randomized query on the output as compared to that of the original data. However, random queries over the full domain would be mostly zero in both the true data and the synthetic data.  Instead we measure the fidelity of the results with respect to the categories present in the active domain. We define this metric $F$ to be the product of the probability densities of the true distribution and the results of Algorithm \ref{cat_hist} over the intersection of the category sets. Thus $F = 1$ if the category sets are equivalent, and $F = 0$ if the intersection of the two is the empty set. Note that approaches that include all of the categories in the active domain will have a fidelity $F = 1$ by definition (but suffer privacy concerns).

\textbf{Experiment:} The \texttt{sex}, \texttt{workclass}, and \texttt{marital-status} fields from the Adult dataset are investigated in our experiment. The domain of sex was taken to be all English words (size $1.71 \times 10^5$), while \texttt{workclass} and \texttt{marital-status} were run over all English word pairs (size $2.94 \times 10^{10}$). Algorithm 1 returns a category set $C$ altered from the active domain.

We ran Algorithm 1 over each of these categories 100 times for varying values of $\epsilon$ and $\rho$, and report the average $F$ score for each in Figure 6. As a general trend, we find that increasing $\rho$ increases the fidelity of the result, except for the \texttt{workclass} attribute when $\epsilon = 0.01$. This increase is a result of the fact that \texttt{workclass} is the longest-tailed distribution of the three we are observing, and for high $\rho$ values we are setting the threshold prohibitively high and obscuring most of the results in this tail. For higher $\epsilon$ values, the threshold is set low enough to include categories from this long tail, eliminating the behavior.

Another observation is that the fidelity for \texttt{sex} is consistently higher than that of the other two attributes. This increase is a result of the fact that the counts are distributed between two high-support attributes and that the specified domain size is much smaller. We can therefore choose thresholds well below the true support for the less common attribute value, leading to near-$100\%$ preservation of the attribute structure for $\rho = 0.9$.

\section{Conclusion}
In this paper we have shown the difficulty of meeting the constraints of differential privacy as a data publisher when the domain of data values is unknown, and presented a mechanism that allows for these constraints to be met when generating synthetic data from a histogram. The data publisher is free to assert any domain they wish (as long as it is not simply the active domain) and the results published using our method will be differentially private.  Utility suffers if the global domain is too large or a subset of the active domain. If the choice of global domain is similar to the active domain, then our approach converges to that of traditional methods.

Our approach adds an additional parameter to control the sensitivity to random values, but this extra parameter does not depend on the data. The privacy burden is therefore shifted from the data publisher to the system itself: privacy violations are impossible even if the data publisher has no knowledge of data privacy and no knowledge of the input data.  Although utility can be poor for large domains or low noise tolerances, we find that a synthetic dataset is always at least minimally useful to communicate the structure and types of the data to developers debugging new tools. Looking forward, approaches to handle weakly constrained domains can lead to tools that are easier to use for novices to facilitate data sharing.

%SHORT VERSION TO HERE

\balance

\bibliographystyle{abbrv}
\bibliography{refs}

% That's all folks!
\end{document}